\documentclass[11pt]{article}
\usepackage[margin=1in]{geometry}
\usepackage{amsmath,amssymb,amsthm}
\usepackage{booktabs}
\usepackage{hyperref}
\usepackage{listings}
\usepackage{xcolor}
\usepackage{cite}
\usepackage{fancyhdr}
\usepackage{xurl} 

\renewcommand{\_}{\textunderscore\allowbreak}

\newtheorem{theorem}{Theorem}
\newtheorem{lemma}[theorem]{Lemma}

\theoremstyle{definition}
\newtheorem{definition}[theorem]{Definition}
\theoremstyle{remark}

\title{Deep Vision: A Formal Proof of Wolstenholme's Theorem in Lean~4}
\author{Alexandre Linhares\thanks{alexandre@linhares.ltd. The author thanks the entire ARGO LABS team and Douglas Hofstadter and Melanie Mitchell.}}
\date{April 2026}

\begin{document}
\sloppy 
\maketitle

\begin{abstract}
We present a formal verification of Wolstenholme's theorem---$\binom{2p}{p} \equiv 2 \pmod{p^3}$ for prime $p \geq 5$---in Lean~4 with Mathlib.
The proof proceeds by expanding the shifted factorial product $\prod_{k=1}^{p-1}(p+k)$ to second order in $p$, identifying the quadratic coefficient as the second elementary symmetric product, and showing its divisibility by $p$ via power sum vanishing in $\mathbb{Z}/p\mathbb{Z}$.
The formalization comprises nine lemmas across approximately 800 lines of Lean, with zero \texttt{sorry} declarations.
To our knowledge, this is the first formal verification of Wolstenholme's theorem in Lean~4.
The proof was discovered through a collaboration between a relational analogy engine for theorem proving and human-directed formalization.
\end{abstract}

\section{Introduction}

Wolstenholme's theorem, proved by Joseph Wolstenholme in 1862~\cite{wolstenholme1862}, is a classical result in number theory asserting a remarkably strong divisibility property of central binomial coefficients.

\begin{theorem}[Wolstenholme, 1862]\label{thm:main}
For every prime $p \geq 5$,
\[
\binom{2p}{p} \equiv 2 \pmod{p^3}.
\]
\end{theorem}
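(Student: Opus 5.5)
We follow the route announced in the abstract: expand the shifted product $\prod_{k=1}^{p-1}(p+k)$ to second order in $p$ and control each coefficient modulo an appropriate power of $p$. The starting identity is
\[
\binom{2p}{p} = \frac{(2p)!}{p!\,p!} = 2\cdot\frac{(2p-1)!}{p!\,(p-1)!} = 2\cdot\frac{\prod_{k=1}^{p-1}(p+k)}{(p-1)!}.
\]
Since $(p-1)!$ is coprime to $p$, it suffices to prove
\[
\prod_{k=1}^{p-1}(p+k) \equiv (p-1)! \pmod{p^3}.
\]
In a formal setting we would state this as an integer identity $P = (p-1)! + p^3 m$ and then multiply through, so that no division modulo $p^3$ is needed.

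Expanding the product, with $e_j$ the elementary symmetric polynomials of $1,\dots,p-1$, gives
\[
\prod_{k=1}^{p-1}(p+k) = e_{p-1} + p\,e_{p-2} + p^2 e_{p-3} + p^3(\cdots).
\]
Here $e_{p-1}=(p-1)!$. Writing $e_{p-2} = (p-1)!\,H$ with $H=\sum 1/k$ and $e_{p-3}=(p-1)!\,S_2$ with $S_2=\sum_{i<j}1/(ij)$, the claim reduces to two congruences:
\[
e_{p-2}\equiv 0 \pmod{p^2}, \qquad e_{p-3}\equiv 0 \pmod p.
\]

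The second congruence is the easier one. The identity $S_2 = \tfrac12\bigl(H^2 - \sum 1/k^2\bigr)$ holds in $\mathbb{Z}/p\mathbb{Z}$ because $p$ is odd. As $k$ ranges over the units, so does $1/k$, so $H\equiv\sum k \equiv 0$. Likewise $\sum 1/k^2\equiv \sum k^2 = \tfrac{(p-1)p(2p-1)}{6}\equiv 0$, using $p\ge 5$ so that $6$ is invertible. This is the power sum vanishing in $\mathbb{Z}/p\mathbb{Z}$.

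The main obstacle is the mod $p^2$ statement for $e_{p-2}$, which is Wolstenholme's harmonic congruence in disguise. I would handle it by a symmetry trick applied to the polynomial itself rather than to fractions. Let $f(x)=\prod_{k=1}^{p-1}(x+k)$. The substitution $k\mapsto p-k$ gives
\[
f(x) = \prod_{k=1}^{p-1}(x+p-k),
\]
so $f(p-x) = \prod_{k=1}^{p-1}(p-x+k)$. With $x=p$ this yields $f(p)=\prod_k (2p-k)$.

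A cleaner route is to pair the terms of $e_{p-2} = (p-1)!\sum_k 1/k$ as $k$ and $p-k$. This gives
\[
2H = \sum_k \frac{p}{k(p-k)},
\]
so $H \equiv \tfrac p2\sum_k \frac{1}{k(p-k)} \equiv -\tfrac p2 \sum_k \frac1{k^2} \pmod{p^2}$, where the factor $p$ in front means the sum is only needed mod $p$. Combined with $\sum 1/k^2\equiv 0 \pmod p$, this gives $H\equiv 0 \pmod{p^2}$.

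To make this integral, I would prove
\[
2e_{p-2} = p\sum_{k}\frac{(p-1)!}{k(p-k)},
\]
where each summand is an integer. Then I would show that $\sum_k \frac{(p-1)!}{k(p-k)}\equiv 0\pmod p$ by reducing to $-\sum k^{-2}$ in $\mathbb{Z}/p\mathbb{Z}$. This integrality bookkeeping is where I expect most of the formal effort to go.

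Finally, assemble the pieces. We have $P=(p-1)!+p\,e_{p-2}+p^2e_{p-3}+p^3R$, with $p\,e_{p-2}\equiv 0$ and $p^2 e_{p-3}\equiv 0 \pmod{p^3}$. Hence $P\equiv (p-1)!\pmod{p^3}$, and multiplying by $2$ and cancelling the unit $(p-1)!$ gives $\binom{2p}{p}\equiv 2 \pmod{p^3}$.
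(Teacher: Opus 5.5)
Your proposal is correct and follows essentially the same route as the paper. It uses the same reduction to $\prod_{k=1}^{p-1}(p+k)\equiv (p-1)! \pmod{p^3}$ and the same second-order expansion, whose coefficients $e_{p-2}$ and $e_{p-3}$ are the paper's $S_1$ and $e_2(\{1,\dots,p-1\})$; the same $k\mapsto p-k$ pairing giving $2e_{p-2}=p\,G$ with $G\equiv\sum k^{-2}\equiv 0 \pmod p$; and the same identity $(\sum f)^2=\sum f^2+2\sum_{i<j}f_if_j$ with power-sum vanishing for $p\mid e_{p-3}$. One harmless slip: since $(p-1)!\equiv -1$ and $p-k\equiv -k$, each integer $(p-1)!/(k(p-k))$ reduces to $+k^{-2}$ rather than $-k^{-2}$, which does not matter because the sum vanishes.
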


The theorem is equivalent to the statement that $p^3$ divides the numerator of the harmonic sum $1 + \frac{1}{2} + \cdots + \frac{1}{p-1}$, and is closely related to Wolstenholme primes---primes $p$ for which the congruence holds modulo $p^4$. Only two such primes are known (16843 and 2124679), and it is an open conjecture whether infinitely many exist.

Despite its classical status, Wolstenholme's theorem has not appeared in the major formal mathematics libraries (Mathlib, Lean's mathlib, or Coq's Mathematical Components) as of April 2026. We present a complete formal verification in Lean~4.

\section{Proof Overview}

The proof reduces the binomial coefficient congruence to a product identity and proceeds through five stages.

\subsection{Stage 1: Binomial reduction}

The identity $\binom{2p}{p} = 2\binom{2p-1}{p-1}$ and the factorization
\[
\binom{2p-1}{p-1} \cdot (p-1)! = \prod_{k=1}^{p-1}(p+k)
\]
reduce the theorem to showing
\begin{equation}\label{eq:prod_shift}
p^3 \;\Big|\; \prod_{k=1}^{p-1}(p+k) - (p-1)!\,.
\end{equation}
The coprimality $\gcd(p^3, (p-1)!) = 1$ (since $p$ is prime) then lifts the divisibility to the binomial coefficient.

\subsection{Stage 2: First-order expansion}

\begin{lemma}[Product shift expansion]\label{lem:first_order}
For any finite set $s$ of natural numbers and integer $d$,
\[
\prod_{k \in s}(d + k) - \prod_{k \in s} k = d \cdot \sum_{j \in s} \prod_{\substack{k \in s \\ k \neq j}} k + d^2 \cdot R
\]
for some integer $R$.
\end{lemma}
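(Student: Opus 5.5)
We argue by induction on the finite set $s$ (Finset induction in Lean), with $d$ fixed throughout and all quantities viewed in $\mathbb{Z}$. We strengthen the statement slightly so that the inductive step closes: we prove that the expansion holds with $R$ an integer depending on $s$ and $d$. Write $P(s)=\prod_{k\in s}k$ and $E_1(s)=\sum_{j\in s}\prod_{k\in s,\,k\neq j}k$. The claim is then
\[
\prod_{k\in s}(d+k) = P(s) + d\,E_1(s) + d^2 R_s
\]
for some $R_s\in\mathbb{Z}$.

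\textbf{Base case.} For $s=\emptyset$ both products equal $1$ and the sum is empty. Hence $R_\emptyset=0$ works.

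\textbf{Inductive step.} Insert a new element $a\notin s$. Multiplying the inductive hypothesis by $(d+a)$ gives
\[
\prod_{k\in s\cup\{a\}}(d+k) = aP(s) + d\bigl(P(s)+aE_1(s)\bigr) + d^2\bigl(E_1(s)+aR_s+dR_s\bigr).
\]
It then suffices to check the two structural identities
\[
P(s\cup\{a\})=aP(s), \qquad E_1(s\cup\{a\})=P(s)+aE_1(s).
\]
The first is immediate. For the second, split the sum over $s\cup\{a\}$ into the term $j=a$, which contributes $\prod_{k\in s}k=P(s)$, and the terms $j\in s$. For $j\in s$, the product over $(s\cup\{a\})\setminus\{j\}$ factors as $a\cdot\prod_{k\in s\setminus\{j\}}k$, using $a\neq j$. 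With these identities we may set $R_{s\cup\{a\}}=E_1(s)+(a+d)R_s$, and the result follows.

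\textbf{Main obstacle.} The expected difficulty is the second identity, namely the bookkeeping for the filtered products $\prod_{k\in s,\,k\neq j}$ once $a$ is inserted. One must rewrite $(s\cup\{a\})\setminus\{j\}$ as $(s\setminus\{j\})\cup\{a\}$, justified by $a\notin s\setminus\{j\}$, and then apply the product-over-insert lemma inside a sum (via a sum-congruence argument). We would state this as a separate helper lemma. The rest is ring normalization, which \texttt{ring} should handle once the casts from $\mathbb{N}$ to $\mathbb{Z}$ are pushed through with \texttt{push\_cast}.
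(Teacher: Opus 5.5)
Your proposal is correct and follows the paper's approach: induction on $s$ via \texttt{Finset.cons\_induction} with $d$ fixed, where the inductive step supplies an explicit witness and closes by \texttt{ring}. Your witness $R_{s\cup\{a\}} = E_1(s) + (a+d)R_s$, together with the recurrence $E_1(s\cup\{a\}) = P(s) + aE_1(s)$, is the first-order analogue of the paper's Stage~4 witness $T = e_2(s) + aT_0 + dT_0$, which is built from its $e_2$ recurrence.
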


This is proved by induction on $s$ using \texttt{Finset.cons\_induction}. Applied with $d = p$ and $s = \{1, \ldots, p-1\}$, it gives
\[
\prod_{k=1}^{p-1}(p+k) - (p-1)! = p \cdot S_1 + p^2 \cdot R
\]
where $S_1 = \sum_{j=1}^{p-1} \frac{(p-1)!}{j}$.

\subsection{Stage 3: Harmonic sum divisibility}

\begin{lemma}[Wolstenholme harmonic sum]\label{lem:harmonic}
For prime $p \geq 5$,
\[
p^2 \;\Big|\; S_1 = \sum_{j=1}^{p-1} \prod_{\substack{k=1 \\ k \neq j}}^{p-1} k\,.
\]
\end{lemma}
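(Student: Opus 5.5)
We want $p^2 \mid S_1 = \sum_{j=1}^{p-1}(p-1)!/j$. The plan is to pair $j$ with $p-j$ and reduce the $p^2$-divisibility to a $p$-divisibility of a sum of squares of inverses. This second step is the same power-sum vanishing in $\mathbb{Z}/p\mathbb{Z}$ mentioned in the abstract.

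\textbf{Pairing step.} Write $P_j=\prod_{k\neq j}k$, so $jP_j=(p-1)!$. Since the map $j\mapsto p-j$ is a bijection of $\{1,\dots,p-1\}$, we have
\[
2S_1=\sum_{j=1}^{p-1}(P_j+P_{p-j}).
\]
Multiply each term by $j(p-j)$. This gives
\[
j(p-j)(P_j+P_{p-j}) = (p-j)(p-1)! + j(p-1)! = p\,(p-1)!.
\]
The problem is that we cannot divide by $j(p-j)$ over $\mathbb{Z}$. So we work instead with the integer identity
\[
P_j+P_{p-j} = p\cdot Q_j, \qquad Q_j=\prod_{k\neq j,\,p-j}k,
\]
which is valid for $j\neq p-j$. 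This always holds here because $p$ is odd. The identity follows because $P_j=(p-j)Q_j$ and $P_{p-j}=jQ_j$. Hence
\[
2S_1 = p\sum_{j=1}^{p-1} Q_j,
\]
and since $p$ is odd it suffices to show $p\mid \sum_j Q_j$.

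\textbf{Reduction modulo $p$.} In $\mathbb{Z}/p\mathbb{Z}$ we have $Q_j\cdot j(p-j)=(p-1)!\equiv -1$ by Wilson's theorem. Also $j(p-j)\equiv -j^2$. Therefore $Q_j\equiv j^{-2}$, and so
\[
\sum_j Q_j\equiv\sum_{j=1}^{p-1} j^{-2}\equiv\sum_{x\in(\mathbb{Z}/p)^\times}x^2,
\]
where the last step reindexes by the bijection $x=j^{-1}$ of the units. It remains to show that $\sum_{x\neq0}x^2=0$ in $\mathbb{Z}/p\mathbb{Z}$ when $p\ge5$. The hypothesis is needed here, since for $p=3$ the sum is $2$. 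For the proof, choose $a\in(\mathbb{Z}/p)^\times$ with $a^2\neq1$. Such an $a$ exists because $p-1\ge 4>2$, for example $a=2$, since $4\not\equiv1$ when $p\ge5$. Multiplication by $a$ permutes the units, so $T=\sum x^2$ satisfies $a^2T=T$, which forces $T=0$. Alternatively one can use Mathlib's lemma on $\sum_{x}x^k=0$ for $(p-1)\nmid k$, or Faulhaber's formula $\sum_{j=1}^{p-1}j^2=(p-1)p(2p-1)/6$, which is divisible by $p$ because $\gcd(6,p)=1$.

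\textbf{Main obstacle.} The main difficulty is the bookkeeping in Lean rather than the mathematics. The hard parts are handling the reindexing $j\mapsto p-j$ over \texttt{Finset.Ico}/\texttt{range} with erased elements, establishing $P_j=(p-j)Q_j$ via \texttt{Finset.prod\_erase}, and transporting the statement to \texttt{ZMod p}, where the inversion bijection on units needs care. To avoid division entirely, I would first prove $p\mid\sum_j Q_j$ by casting to \texttt{ZMod p} and showing $j^2 Q_j\equiv 1$ there. I would then conclude $p^2\mid 2S_1$ in $\mathbb{Z}$, and finally $p^2\mid S_1$ using $\gcd(2,p^2)=1$.
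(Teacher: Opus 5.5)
Your proof is correct and follows essentially the paper's route: pair $j$ with $p-j$ to get $2S_1 = p\sum_j Q_j$, use Wilson's theorem to get $Q_j \equiv j^{-2} = j^{p-3} \pmod p$, show the resulting power sum vanishes (which is where $p\ge 5$ enters), and cancel the factor $2$. The differences are cosmetic. You write the quotient explicitly as $Q_j=\prod_{k\neq j,\,p-j}k$ instead of extracting it from $p \mid P_j+P_{p-j}$ via $\gcd(j(p-j),p)=1$. You also prove $\sum_{x\neq 0} x^2=0$ directly by a scaling argument rather than citing power-sum vanishing for exponent $p-3$.
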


The proof uses the involution $j \mapsto p - j$ to pair terms. For each pair, $j(p-j)(f(j) + f(p-j)) = p \cdot (p-1)!$, and since $\gcd(j(p-j), p) = 1$, we get $p \mid f(j) + f(p-j)$. Writing $2S_1 = p \cdot G$, the quotient $G = \sum Q(j)$ satisfies $Q(j) \equiv j^{p-3} \pmod{p}$ (via Wilson's theorem and Fermat's little theorem), and $\sum j^{p-3} \equiv 0 \pmod{p}$ by power sum vanishing. Since $\gcd(2, p) = 1$, we conclude $p^2 \mid S_1$.

This gives $p^3 \mid p \cdot S_1$, reducing \eqref{eq:prod_shift} to $p \mid R$.

\subsection{Stage 4: Second-order expansion}

\begin{definition}
The \emph{second elementary symmetric product} of a finite set $s$ is
\[
e_2(s) = \sum_{\substack{i, j \in s \\ i < j}} \prod_{\substack{k \in s \\ k \neq i, k \neq j}} k\,.
\]
\end{definition}

\begin{lemma}[Second-order expansion]\label{lem:second_order}
For any finite set $s$ and integer $d$,
\[
\prod_{k \in s}(d+k) - \prod_{k \in s} k - d \sum_{j \in s} \prod_{k \neq j} k = d^2 \cdot e_2(s) + d^3 \cdot T
\]
for some integer $T$.
\end{lemma}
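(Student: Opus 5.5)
We prove Lemma~\ref{lem:second_order} by induction on $s$ via \texttt{Finset.cons\_induction}, in the same way as Lemma~\ref{lem:first_order}. The inductive invariant is the full statement with $T$ existentially quantified. Write $P(s)=\prod_{k\in s}k$, $e_1(s)=\sum_{j\in s}\prod_{k\neq j}k$, and $e_2(s)$ as in the definition. The first step is to record how these behave when an element $a\notin s$ is inserted:
\[
P(a\triangleright s)=a\,P(s),\qquad e_1(a\triangleright s)=a\,e_1(s)+P(s),\qquad e_2(a\triangleright s)=a\,e_2(s)+e_1(s).
\]
The third identity holds because a pair $\{i,j\}$ in $a\triangleright s$ either avoids $a$, in which case its term picks up a factor $a$, or contains $a$, in which case the term equals $\prod_{k\in s,\,k\neq j}k$.

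\textbf{Base case.} For $s=\emptyset$ the left side is $1-1-0=0$ and $e_2(\emptyset)=0$, so $T=0$ works.

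\textbf{Inductive step.} Assume
\[
\prod_{k\in s}(d+k)=P(s)+d\,e_1(s)+d^2e_2(s)+d^3T.
\]
Multiplying by $(d+a)$ gives
\[
\prod_{k\in a\triangleright s}(d+k)=aP+d(ae_1+P)+d^2(ae_2+e_1)+d^3(aT+e_2+dT).
\]
By the three recurrences, this is $P'+de_1'+d^2e_2'+d^3T'$ with $T'=aT+e_2(s)+dT$. Hence the new witness is $T'$, and the inductive step reduces to a \texttt{ring} computation once the recurrences are substituted.

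The main obstacle is the recurrence for $e_2$, since the definition uses ordered pairs $i<j$ and the new element $a$ need not be the largest. The plan is to reformulate $e_2$ first as a sum over $s.\mathrm{offDiag}$ filtered by $i<j$, and to prove that this equals half of the unordered double sum $\sum_{i\neq j}$. Better, to avoid division, we would prove the symmetric identity $2e_2(s)=\sum_{i\in s}\sum_{j\in s,\,j\neq i}\prod_{k\neq i,j}k$. The recurrence for the double sum follows by splitting off $i=a$ and $j=a$ with \texttt{Finset.sum\_cons} and using \texttt{Finset.erase\_cons}-type lemmas for the products.

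If halving is awkward, an alternative is to state the lemma with an intermediate quantity $E_2(s)$ defined recursively (or as \texttt{Finset.powersetCard 2} sums of complementary products), run the induction with $E_2$, and separately prove $E_2(s)=e_2(s)$ by the bijection between $2$-subsets $\{i,j\}$ and ordered pairs with $i<j$. Either way, the algebraic part of the induction is routine, and the real work is the combinatorial bookkeeping that makes the $e_2$ insertion recurrence go through.
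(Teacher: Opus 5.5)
Your proof is correct and follows the paper's route: induction via \texttt{Finset.cons\_induction}, the insertion recurrence $e_2(a\triangleright s)=a\,e_2(s)+e_1(s)$ (the paper's Lemma~\ref{lem:e2_cons}), and the same witness $T'=e_2(s)+aT+dT$, closed by \texttt{ring}. The only difference is how you would prove the recurrence: the paper splits the pairs containing $a$ into those with $a<j$ and those with $j<a$ and recombines them with a filter partition, so your symmetrization or \texttt{powersetCard} detour is unnecessary, though it would also work.
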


This is proved by induction on $s$, using a recurrence for $e_2$:

\begin{lemma}[$e_2$ recurrence]\label{lem:e2_cons}
For $a \notin s$,
\[
e_2(\{a\} \cup s) = \sum_{j \in s} \prod_{\substack{k \in s \\ k \neq j}} k + a \cdot e_2(s)\,.
\]
\end{lemma}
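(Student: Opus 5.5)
We prove the $e_2$ recurrence by splitting the pairs $\{i,j\}$ with $i<j$ in $s'=\{a\}\cup s$ according to whether they contain $a$. Write the defining sum of $e_2(s')$ as a sum over two-element subsets $\{i,j\}\subseteq s'$. It is convenient to first restate the definition this way, as a sum over \texttt{s.powersetCard 2} or over the off-diagonal pairs modulo swap, because the condition $i<j$ is tied to the order on $\mathbb{N}$, while $a$ need not be the minimum of $s'$. Every two-element subset of $s'$ either contains $a$ or is a two-element subset of $s$, and these two families are disjoint because $a\notin s$.

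\emph{Pairs containing $a$.} These are the pairs $\{a,j\}$ with $j\in s$, and they correspond bijectively to $j\in s$. The summand is $\prod_{k\in s',\,k\neq a,\,k\neq j}k$. Since $a\notin s$, the set $s'\setminus\{a\}$ is $s$, so the summand equals $\prod_{k\in s,\,k\neq j}k$. Summing over $j$ gives exactly the first term $\sum_{j\in s}\prod_{k\in s,\,k\ne j}k$ of the recurrence.

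\emph{Pairs $\{i,j\}\subseteq s$.} Here $a\neq i,j$, so the index set $s'\setminus\{i,j\}$ equals $\{a\}\cup(s\setminus\{i,j\})$, again a disjoint union. By \texttt{Finset.prod\_cons} the summand factors as $a\cdot\prod_{k\in s,\,k\ne i,j}k$. Pulling $a$ out of the sum with \texttt{Finset.mul\_sum} gives $a\cdot e_2(s)$.

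Adding the two contributions proves the lemma.

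The main obstacle is formal bookkeeping rather than mathematics. The naive approach splits the filtered set $\{(i,j)\in s'\times s' : i<j\}$ into the pieces $i=a$, $j=a$, and neither. Pairs $(a,j)$ with $a<j$ and pairs $(i,a)$ with $i<a$ both need to be recombined into a single sum over $j\in s$, via the case split $j<a$ versus $a<j$ (equality is excluded since $a\notin s$). I would try to avoid this by proving once a symmetric form of the definition:
\[
2e_2(s)=\sum_{i\in s}\sum_{j\in s,\,j\neq i}\prod_{k\in s,\,k\neq i,\,k\neq j}k,
\]
obtained by swapping $i$ and $j$ using the symmetry of the summand. The recurrence then follows from a double-sum decomposition over $s'=\texttt{cons}\ a\ s$ using \texttt{Finset.sum\_cons} twice. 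The two cross terms, $i=a$ and $j=a$, each give the first-order sum, and the remaining terms give $2a\,e_2(s)$. Finally divide by $2$, or cancel it, since the values lie in $\mathbb{Z}$, which has no zero divisors.

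Within the diagonal-free double sum, the erasures $s'.\mathrm{erase}\,a=s$ and $(s'.\mathrm{erase}\,i).\mathrm{erase}\,j=\texttt{cons}\ a\ ((s.\mathrm{erase}\,i).\mathrm{erase}\,j)$ for $i,j\in s$ are routine applications of \texttt{Finset.erase\_cons} and \texttt{Finset.erase\_cons\_of\_ne}.
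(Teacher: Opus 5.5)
Your proof is correct. Its mathematical core is the same as the paper's: partition the pairs in $\{a\}\cup s$ by whether they contain $a$, and use $a\notin s$ to identify the erased index sets. The difference is in how the order condition $i<j$ in the definition is handled.

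The paper does not spell the proof out, but its discussion of the final combinatorial identity indicates it works directly with the ordered definition. After unfolding $e_2$ over \texttt{cons a s}, the outer term $x=a$ contributes $\sum_{j\in s,\,a<j}\prod_{k\neq j}k$. For $x\in s$, the inner index set is a conditional filter that contains $a$ exactly when $x<a$. The paper splits that inner sum at $x_1=a$ versus $x_1\in s$ and pushes the insert/erase through the products. It then merges the $a<j$ and $j<a$ contributions into $\sum_{j\in s}\prod_{k\neq j}k$ via a filter partition (\texttt{Finset.sum\_filter\_add\_sum\_filter\_not}), and the remainder gives $a\cdot e_2(s)$. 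That is exactly the ``naive'' bookkeeping you chose to sidestep.

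Your symmetrization $2e_2(s)=\sum_{i}\sum_{j\neq i}\prod_{k\neq i,j}k$ makes the inductive step uniform: there is no case split on where $a$ sits in the order, just \texttt{sum\_cons} and \texttt{erase\_cons} twice. It also isolates the order-dependence in a lemma about a single arbitrary $s$. The cost is that this auxiliary lemma still needs its own $j<i$ versus $i<j$ split, an interchange of the two filtered sums, the commutation $(s\setminus\{i\})\setminus\{j\}=(s\setminus\{j\})\setminus\{i\}$, and a final cancellation of $2$ in $\mathbb{Z}$.

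The paper's route needs neither the auxiliary lemma nor the factor $2$. In exchange, it leaves exactly the conditional-filter identity that the automated search could not close and that had to be proved by hand.
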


The inductive witness is $T = e_2(s) + a \cdot T_0 + d \cdot T_0$ where $T_0$ is the third-order remainder from the inductive hypothesis. The closure is by \texttt{ring}.

Combining Lemmas~\ref{lem:first_order} and~\ref{lem:second_order}: from $d^2 R = d^2 e_2(s) + d^3 T$ and $d \neq 0$, we get $R = e_2(s) + d \cdot T$, reducing $p \mid R$ to $p \mid e_2(s)$.

\subsection{Stage 5: Divisibility of $e_2$}

\begin{lemma}\label{lem:p_dvd_e2}
For prime $p \geq 5$,
\[
p \;\Big|\; e_2(\{1, \ldots, p-1\}).
\]
\end{lemma}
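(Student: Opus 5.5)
We work modulo $p$ and relate $e_2(\{1,\dots,p-1\})$ to power sums via Wilson's theorem. For $i<j$ in $s=\{1,\dots,p-1\}$, the term $\prod_{k\neq i,j}k$ satisfies $ij\cdot\prod_{k\ne i,j}k=(p-1)!$. Since $i,j$ are units mod $p$, Wilson's theorem gives
\[
\prod_{k\neq i,j}k \equiv -\,i^{-1}j^{-1} \pmod p,
\]
where inverses are taken in $\mathbb{Z}/p\mathbb{Z}$. Concretely, in the Lean setting we avoid inverses by using Fermat: $i^{-1}\equiv i^{p-2}$. So the term is congruent to $-(ij)^{p-2}$. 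Hence
\[
e_2(s)\equiv -\sum_{i<j} i^{p-2}j^{p-2} \pmod p .
\]

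Next we apply Newton's identity in its elementary form to $x_k=k^{p-2}$:
\[
2\sum_{i<j}x_ix_j=\Big(\sum_k x_k\Big)^2-\sum_k x_k^2 .
\]
With $m=p-2$ this gives
\[
2e_2(s)\equiv -\Big(\sum_{k=1}^{p-1}k^{m}\Big)^2+\sum_{k=1}^{p-1}k^{2m}\pmod p.
\]
The power sum vanishing fact used in Lemma~\ref{lem:harmonic} says $\sum_{k=1}^{p-1}k^{e}\equiv 0\pmod p$ whenever $(p-1)\nmid e$. The exponent $e=p-2$ satisfies $0<p-2<p-1$. For $e=2p-4$, we have $2p-4=(p-1)+(p-3)$ with $0<p-3<p-1$ because $p\ge 5$; this is exactly where $p\geq5$ enters. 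So both sums vanish, giving $p\mid 2e_2(s)$. Since $\gcd(2,p)=1$, we conclude $p\mid e_2(s)$.

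For the formalization, I would carry this out by casting to \texttt{ZMod p}. The symmetric-function identity is awkward as a double sum over $i<j$. I would instead either prove by Finset induction that $2e_2(s)=(\sum_j\prod_{k\ne j}k)^2\cdot(\cdot)$, or more simply prove the general identity
\[
2\sum_{i<j}f(i)f(j)=\Big(\sum f\Big)^2-\sum f^2
\]
by splitting $\sum_{i,j}$ into diagonal, $i<j$, and $i>j$ parts. I would also confirm that the definition's inner product $\prod_{k\ne i,k\ne j}k$ reduces correctly to $(p-1)!/(ij)$ in the field. Both rely on \texttt{Finset.prod\_erase}-style lemmas applied twice.

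The main obstacle I expect is this bookkeeping: converting the ordered-pair sum $\sum_{i<j}$ in the definition of $e_2$ into the square-of-sum form inside \texttt{ZMod p}, together with the double erase. If that proves painful, the fallback is to use Lemma~\ref{lem:e2_cons} inductively. That would show the casted $e_2$ of any set of units equals $\prod s\cdot\sum_{i<j}(ij)^{-1}$, after which the symmetric identity only needs to be proven once for reciprocals.
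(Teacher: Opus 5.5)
Your proposal is correct and follows the paper's proof essentially step for step: Wilson's theorem with the product-erase identity applied twice and Fermat gives $e_2 \equiv -\sum_{i<j} i^{p-2}j^{p-2} \pmod p$; the square-of-sum identity turns this into power sums with exponents $p-2$ and $2(p-2)$, where the latter reduces by Fermat to exponent $p-3$; both sums vanish for $p \ge 5$, and $\gcd(2,p)=1$ finishes. The half-stated induction alternative with the placeholder factor is unnecessary, since the direct square-of-sum route you settle on is exactly the one the paper uses.
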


\begin{proof}
Work in $\mathbb{Z}/p\mathbb{Z}$. By two applications of the product-erase identity $i \cdot \prod_{k \neq i} k = \prod_k k$ and Wilson's theorem $\prod_{k=1}^{p-1} k \equiv -1$, each term satisfies
\[
\prod_{\substack{k \neq i \\ k \neq j}} k \equiv -i^{p-2} \cdot j^{p-2} \pmod{p}
\]
where we use Fermat's little theorem to write $(ij)^{-1} = i^{p-2} j^{p-2}$.

Therefore
\[
e_2 \equiv -\sum_{\substack{i < j}} i^{p-2} j^{p-2} \pmod{p}.
\]

The algebraic identity
\[
\Bigl(\sum_{j} f(j)\Bigr)^2 = \sum_j f(j)^2 + 2 \sum_{i < j} f(i) f(j)
\]
applied with $f(j) = j^{p-2}$ gives
\[
2 \sum_{i < j} i^{p-2} j^{p-2} = \Bigl(\sum_j j^{p-2}\Bigr)^2 - \sum_j j^{2(p-2)}.
\]

Both sums vanish in $\mathbb{Z}/p\mathbb{Z}$:
\begin{itemize}
\item $\sum_{j=1}^{p-1} j^{p-2} = 0$ by power sum vanishing with $k = p-2$ (since $1 \leq p-2 \leq p-2$).
\item $\sum_{j=1}^{p-1} j^{2(p-2)} = \sum j^{p-3} = 0$ by Fermat ($j^{2(p-2)} = j^{(p-1)+(p-3)} = j^{p-3}$) and power sum vanishing with $k = p-3$ (since $1 \leq p-3 \leq p-2$ for $p \geq 5$).
\end{itemize}

Thus $2 \sum_{i<j} i^{p-2} j^{p-2} = 0$, and since $\gcd(2, p) = 1$, the sum itself vanishes. Therefore $e_2 \equiv 0 \pmod{p}$.
\end{proof}

\begin{proof}[Proof of Theorem~\ref{thm:main}]
Combine Stages 1--5: $p^3 \mid p \cdot S_1$ (Stage 3) and $p \mid R$ (Stages 4--5) give $p^3 \mid p S_1 + p^2 R = \prod(p+k) - (p-1)!$ (Stage 2). Coprimality $\gcd(p^3, (p-1)!) = 1$ lifts this to $\binom{2p-1}{p-1} \equiv 1 \pmod{p^3}$, and $\binom{2p}{p} = 2\binom{2p-1}{p-1}$ gives the result (Stage 1).
\end{proof}

\section{The Formalization}

\subsection{Structure}

The Lean~4 formalization is contained in a single self-contained file:

\begin{center}
\begin{tabular}{lrl}
\textbf{File} & \textbf{Lines} & \textbf{Contents} \\
\hline
\texttt{Wolstenholme\_1862.lean} & $\sim$660 & Stages 1--5, complete proof \\
\end{tabular}
\end{center}

The file contains zero \texttt{sorry} declarations and zero \texttt{axiom} declarations; every lemma is fully proved. The complete source is available at \url{https://github.com/ARGO-LABORATORY/Wolstenholme_1862/blob/main/Wolstenholme_1862.lean}.

\subsection{Key Mathlib dependencies}

\begin{itemize}
\item \texttt{ZMod.wilsons\_lemma}: $(p-1)! \equiv -1 \pmod{p}$
\item \texttt{ZMod.pow\_card\_sub\_one\_eq\_one}: Fermat's little theorem in $\mathbb{Z}/p\mathbb{Z}$
\item \texttt{Finset.mul\_prod\_erase}: $a \cdot \prod_{k \in s \setminus \{a\}} f(k) = \prod_{k \in s} f(k)$
\item \texttt{Finset.sum\_filter\_add\_sum\_filter\_not}: partition of finite sums
\item \texttt{Finset.cons\_induction}: structural induction on finite sets
\end{itemize}

\subsection{Discovery process}

The proof was discovered through collaboration between:
\begin{enumerate}
\item A \emph{general relational analogy engine} (``Deep Vision'') that matches proof states against a library of 217{,}000 Mathlib-extracted examples to suggest tactics.
\item \emph{Human direction} that decomposed $p \mid R$ into three independent sub-problems and chose when to abandon automated search in favor of hand-written proofs.
\item \emph{LLM formalization} (Claude) that wrote the Finset manipulation proofs and number-theoretic arguments when the automated search hit combinatorial walls.
\end{enumerate}

\section{The Role of Deep Vision}

Deep Vision is a relational analogy engine that treats a Lean~4 proof state as a network: hypotheses, goals, and types are entities; rewrite, apply, head-match, structure, equality, and other relations connect them. Deep Vision is an AI model to solve ARC-AGI problems~\cite{chollet_measure_2019} on a smartphone, and it is designed with the principles of Fluid Concepts and Creative Analogies~\cite{hofstadter_fluid_1995, hofstadter_go_1999, hofstadter_surfaces_2013, hofstadter_copycat_1984, mitchell_emergence_1990, chalmers_high-level_1992, french_tabletop_1991, foundalis_phaeaco_2006, rehling_letter_2001, nichols_musicat_2012, linhares_active_2005, linhares_understanding_2007, linhares_questioning_2010, linhares_entanglement_2012, linhares_what_2013, linhares_emergence_2014, linhares_deep_vision, linhares_deep_vision_chess, mitchell_analogy_1993}. Observe that a transformation in ARC-AGI maps to theorems naturally, as theorems are composed by transformations.  Given a new proof state, the engine finds library entries whose proof states are \emph{relationally similar}---not by surface syntax, but by \emph{the pattern of structural connections---and transfers their tactics}.  This agressively prunes the search space. 

The engine maintains a library of 217{,}133 proof states extracted from Mathlib, organized by 14 relation types and indexed by relation-profile signatures for fast lookup. Matching uses a Quadratic Assignment Problem consistency optimization, run on GPU (Apple MPS or Nvidia's CUDA) for large states and via multiprocessing for smaller ones. The whole system runs on a laptop with access to Claude Code.

\subsection{What Deep Vision discovered}

\paragraph{The 7-step tactic chain for $e_2$ recurrence.}
The most direct contribution was discovering the simplification chain that unfolds \texttt{e2\_prod} and decomposes the Finset sums:
\begin{lstlisting}
simp
simp [e2_prod, Finset.sum_cons, Finset.filter_cons, Finset.prod_cons]
simp only [Finset.insert_erase, Finset.sum_insert, Finset.filter_insert]
simp [Finset.sum_insert, ha, Finset.filter_insert, if_pos, if_neg]
ring_nf
simp only [Finset.sum_ite, Finset.filter_filter, Finset.insert_eq]
simp
\end{lstlisting}
Each step was found by matching the current proof state against the library, retrieving tactics from structurally similar proofs, and testing them in Lean. The chain was built incrementally over depths 0--6, with each successful tactic producing a new proof state that was matched again. No human intervention was involved in discovering these steps; the engine found them through relational similarity alone.

\paragraph{The witness pattern from existence proofs.}
When attacking \texttt{prod\_shift\_expansion\_2} (the second-order expansion), the engine's top matches were existence proofs from diverse mathematical domains:
\begin{center}
\begin{tabular}{lrl}
\textbf{Match} & \textbf{Score} & \textbf{Tactic} \\
\hline
\texttt{MB\_MEA\_26240} (Measure theory) & 13.27 & \texttt{exact exists\_between ...} \\
\texttt{MB\_ANA\_54600} (Normed operators) & 12.23 & \texttt{use C} \\
\texttt{MB\_RTH\_21065} (Adic completion) & 12.05 & \texttt{use L} \\
\texttt{MB\_LST\_2290} (List.Forall2) & 12.57 & \texttt{exact \ensuremath{\langle}b :: l, ...\ensuremath{\rangle}} \\
\end{tabular}
\end{center}
The relational pattern these proofs share is: the goal is an existential, and the proof provides a concrete witness followed by algebraic closure. The engine suggested \texttt{use~0}, which is mathematically correct for the base case of the induction (empty finset). This pointed toward the induction strategy: provide a witness at each step and close by computation---the same architecture as the already-proved first-order expansion.

\paragraph{Identification of dead ends.}
The engine also identified strategies that \emph{fail}. When \texttt{congr~1} was tried on the remaining combinatorial identity, it split the equation $A + B = C + D$ into $A = C$ and $B = D$---an unsound decomposition, since neither equation holds individually. The engine explored this branch for 30 minutes before backtracking, establishing that \texttt{congr} is not the right approach for this goal.

\paragraph{Relevant Finset lemmas.}
At deeper levels, the engine's matches consistently identified linear algebra entries (\texttt{MB\_LIN\_*}) with scores around 54, whose tactics involved \texttt{ext}, \texttt{simp~[...]}, and structural decomposition. These matches guided the repair phase (where Claude adapts a failing tactic to the current goal), leading to suggestions like \texttt{simp~[e2\_prod, Finset.sum\_cons, ...]} that correctly unfold the relevant definitions.

\subsection{What Deep Vision could not do}

\paragraph{The final combinatorial identity.}
After the 7-step chain, the remaining goal was a Finset sum identity involving conditional filters and insert/erase operations:
\[
\sum_{\substack{x \in s \\ a < x}} \prod_{k \neq x} k + \sum_{x \in s} \sum_{\substack{x_1 \in \text{if } x < a \text{ then} \ldots}} \prod_{k \neq x, x_1} k = \sum_{x \in s} \prod_{k \neq x} k + a \cdot e_2(s).
\]
The engine tried 42 tactics across multiple depths without closing this goal. The identity requires splitting the inner sum at $x_1 = a$ vs.\ $x_1 \in s$, computing how insert/erase interacts with products, and recombining via a filter partition---a chain of Finset API manipulations too specific for any library entry to match.

\paragraph{The number-theoretic closure.}
The proof that $p \mid e_2$ requires Wilson's theorem, Fermat's little theorem, the algebraic identity $({\sum f})^2 = \sum f^2 + 2\sum_{i<j} f_i f_j$, and power sum vanishing. No combination of library matching could produce this argument; it required understanding the mathematical \emph{meaning} of the terms, not just their structural shape.

\subsection{The collaboration pattern}

The effective pattern was \emph{exploration by analogy, closure by understanding}:
\begin{enumerate}
\item Deep Vision explores the tactic space, finding chains of progress-making steps and identifying which Mathlib lemmas are structurally relevant.
\item A human recognizes when exploration has reached diminishing returns and decomposes the remaining goal into pieces amenable to direct proof.
\item An LLM (Claude) writes the final proofs using the mathematical insight that the engine's structural matching cannot provide, and the Lean API knowledge needed for formalization.
\end{enumerate}

This pattern---machine for breadth, human for decomposition, LLM for depth---may generalize beyond this specific theorem.

\subsection{Deep Vision vs.\ deep neural networks}

Deep Vision is the technology behind ARGO LABORATORY's entry to Chollet's ARC-AGI challenge~\cite{chollet_measure_2019}.  It represents a fundamentally different paradigm from deep learning.

\begin{table}[h]
\centering
\small
\begin{tabular}{p{2.2cm}p{5.5cm}p{5.5cm}}
\toprule
& \textbf{Deep Vision} & \textbf{Deep Neural Networks} \\
\midrule
\textbf{Explain\-ability}
  & Full step-by-step trace: which relations matched, which entities corresponded, why a particular analogy was chosen.  Can explain to any judge why a decision was made---and what to change to get a different outcome.
  & Black box.  Gradient-based attribution methods (SHAP, LIME) are post-hoc approximations, not the actual reasoning path. \\[6pt]
\textbf{Efficiency}
  & $<$\$0.01 per ARC-AGI task suite (100s of tasks).  Runs in a browser, in JavaScript, on a CPU, on a phone, in airplane mode.  The NP-hard matching runs on commodity hardware---a MacBook Air's GPU found the proof of Wolstenholme's theorem.
  & \$2+ per task suite (frontier models).  Requires data-center GPUs, high-bandwidth interconnects, megawatts of power. \\[6pt]
\textbf{Repre\-sentation}
  & Emergent and distributed.  The system \emph{crafts} its representation as it studies each problem---entities, relations, and correspondences emerge from the structure of the specific situation.
  & Fixed.  Representations are learned once during training and frozen at inference.  The network cannot restructure its representation for a novel problem. \\[6pt]
\textbf{Paradigm}
  & Relational analogy as cognition.  Intelligence is perceiving shared relational structure across superficially different situations~\cite{hofstadter_fluid_1995}.  Domain-independent: the same matching engine handles chess, theorem proving, and ARC-AGI tasks.
  & Statistical pattern completion.  Intelligence is next-token prediction over massive corpora.  Domain generality comes from scale, not from architectural principles. \\[6pt]
\textbf{Access\-ibility}
  & Fluid general intelligence in your pocket, regardless of who you are, where you live, or what you can afford.  Runs offline, on-device, no API keys, no subscriptions, no data leaving the device.
  & Controlled by a handful of data-center owners.  Requires internet, API access, and per-token payment.  Concentrates capability in the hands of those who can afford the infrastructure. \\
\bottomrule
\end{tabular}
\caption{Deep Vision vs.\ deep neural networks on five axes.}
\label{tab:dv-vs-dnn}
\end{table}

\section{Wild Ideas}
\label{sec:wild-ideas}

A consequence of structural-first matching is that Deep Vision routinely proposes tactics no working mathematician would write. The engine reads past the content question \emph{``what kind of mathematical object is this?''} and matches on the relational skeleton \emph{``what shape does this proof have?''} When the skeleton matches across domains, the source lemma gets ranked highly even when it lives in an entirely different mathematical universe.

Three of the wildest such proposals appeared as top-ranked matches during the proof of \texttt{prod\_shift\_expansion\_2} (Stage~4)---an elementary identity about polynomial expansions on finsets of integers.

\subsection{Wild idea 1: measure theory for elementary number theory}

\paragraph{For a general scientific audience.}
Measure theory is the branch of analysis underlying integration and probability on the real line. It studies which subsets of $\mathbb{R}$ can be assigned a coherent ``size,'' how integrals behave under limits, and how to compare almost-everywhere properties. None of this has any visible relationship to the elementary question \emph{``which integers divide a given finite sum?''}

A reasonable analogy is a chess grandmaster, mid-tournament, asking their watchmaker for advice on escaping a fork. Different game, different tools, different framework of expertise. And yet, very occasionally, the watchmaker's habits of precision offer the right hint. Hofstadter calls this kind of leap \emph{high-level perception}~\cite{chalmers_high-level_1992}: the recognition that two situations share an underlying \emph{form} despite having no common \emph{content}.

The most famous historical precedent is Wiles's proof of Fermat's Last Theorem, which routed an elementary-looking equation $a^n + b^n = c^n$ through the heavy machinery of modular forms and elliptic curves. No-one expected that those domains would have anything to say to each other; the surprise was that they did. Deep Vision is not finding Wiles-class connections, but the same instinct---trust the relational shape, even when the content seems unrelated---is at work.

\paragraph{For the Lean community.}
The match was \texttt{MB\_MEA\_26240}, a measure-theoretic existence lemma, with score 13.27, suggesting \texttt{exact exists\_between ...}. Mathlib's \texttt{exists\_between} (in the order-theory hierarchy) proves that in a densely ordered set, given $a < b$, there exists $c$ with $a < c < b$. The tactic supplies a witness in the open interval and closes by hypothesis.

The relational skeleton is: \emph{existential goal $\to$ named witness $\to$ closure by computation}. \texttt{prod\_shift\_expansion\_2} has the same shape:
\[
\exists\, T : \mathbb{Z},\; \prod_{k \in s}(d+k) - \prod_{k \in s} k - d \sum_{j \in s} \prod_{k \neq j} k = d^2 \cdot e_2(s) + d^3 \cdot T.
\]
The engine ranked \texttt{exists\_between} highly because both proofs ``produce a value, then verify a relation,'' ignoring that the source operates over densely ordered fields and the target over finsets of integers.

A mathematician's actual approach is structural induction on the finset \texttt{s} via \texttt{Finset.cons\_induction}, building $T$ recursively from the third elementary symmetric product (Lemma~\ref{lem:e2_cons}). \texttt{exists\_between} cannot apply: there is no relevant ordering on finite sets of integers in this proof, no midpoint to construct, no density. Yet the \emph{shape}---``produce an integer, verify by \texttt{ring}''---matched precisely.

\subsection{Wild idea 2: adic completion}

\paragraph{For a general scientific audience.}
The $p$-adic integers $\mathbb{Z}_p$ are constructed as a limit of the finite rings $\mathbb{Z}/p^n\mathbb{Z}$, glued together topologically. This construction---``adic completion''---is heavy machinery: profinite limits, inverse systems, completed tensor products, all living in algebraic geometry and class field theory.

There \emph{is} a real connection between $p$-adic methods and Wolstenholme's theorem: the theorem can be reformulated as saying that the harmonic sum $H_{p-1} = 1 + \tfrac{1}{2} + \cdots + \tfrac{1}{p-1}$ has $p$-adic valuation at least three. But the elementary proof in this paper does not need any topology, any limit, any completion. Reaching for adic-completion lemmas is using a sledgehammer to crack a nut\---particularly when the nut is already cracked.

The historical precedent is Hilbert's transcendence proofs for $e$ and $\pi$: initially they required heavy analytic machinery, and today they are taught with elementary methods. Once the elementary tools are at hand, importing the heavy machinery no longer makes sense\---unless, of course, all you've ever seen is the heavy machinery, in which case it's the only thing that comes to mind. Mathlib has many more entries in algebraic geometry than in elementary number theory, so the engine's ``vocabulary'' is biased that way.

\paragraph{For the Lean community.}
The match was \texttt{MB\_RTH\_21065} (in Mathlib's \texttt{RingTheory.\allowbreak AdicCompletion} hierarchy), score 12.05, suggesting \texttt{use L}. In adic-completion lemmas, the existential $\exists L,\; \mathrm{IsLimit}(L) \wedge P(L)$ is closed by supplying the inverse-limit object \texttt{L} as the witness, then verifying the universal property by hand or via Mathlib's completion API.

The engine matched on ``supply the higher-order structural object as a single named witness''---a relational equivalence between ``profinite limit'' and ``third-order polynomial remainder.'' Both are, in their respective hierarchies, \emph{the next thing up}. Structurally identical; mathematically incomparable.

A mathematician would write the explicit symmetric-polynomial expression for $T$ (the third elementary symmetric product, or its recursive analog) and close by \texttt{ring}. There is no completion, no limit, no inverse system. The match was relationally clean but mathematically a non-sequitur.

\subsection{Wild idea 3: computer-science lists for set-indexed sums}

\paragraph{For a general scientific audience.}
Imagine arranging library books. A mathematician thinks of the books on a shelf as a \emph{set}: ``which books are present?'' A computer scientist thinks of them as a \emph{list}: ``in what order are they arranged?'' Lists carry positional information; sets do not. Both data structures are valid, but they fit different problems.

A number theorist proving an identity about sums over $\{1, 2, \ldots, p-1\}$ thinks of those integers as an unordered finset. To suggest constructing a \emph{list}\---an ordered sequence with explicit positional structure\---is to import irrelevant information into a problem that does not care about it. It is like a chef being told to make a stew by first arranging their ingredients alphabetically.

This is the recurring lesson of data-structure choice in software engineering: the wrong representation can make a problem look hard, and the right one can collapse it. Hofstadter and Sander's \emph{Surfaces and Essences}~\cite{hofstadter_surfaces_2013} makes the analogous point throughout for natural language and cognition\---analogy is the act of choosing the right level of representation for a situation. Deep Vision, by ranking by structural shape, does not yet make this choice well; it offers all matching shapes and lets the human (or Claude, or the Lean compiler) filter.

\paragraph{For the Lean community.}
The match was \texttt{MB\_LST\_2290} (\texttt{List.Forall2}), score 12.57, suggesting \texttt{exact $\langle$b :: l, ...$\rangle$}. \texttt{List.Forall2 R l\_1 l\_2} asserts that two lists have equal length and \texttt{R} holds pointwise; the constructor \texttt{$\langle$b :: l, ...$\rangle$} prepends \texttt{b} to a list \texttt{l} in lockstep with another existing list.

The engine matched on ``incrementally build a structured collection in lockstep with another.'' For \texttt{prod\_shift\_expansion\_2}, the inductive proof does build $T$ incrementally as the finset grows under \texttt{Finset.cons\_induction}\---the abstract shape genuinely matches.

The mismatch is in the API. List-cons \texttt{b :: l} is well-typed for any \texttt{b}; finset-cons \texttt{Finset.cons a s ha} requires a proof \texttt{ha : a $\notin$ s}. The base cases differ (\texttt{[]} vs.\ \texttt{$\emptyset$}); the induction principles differ (\texttt{List.rec} vs.\ \texttt{Finset.cons\_induction}); the underlying structures differ (ordered, possibly with duplicates vs.\ unordered, distinct). A list-based suggestion would either need to coerce the finset to a list (introducing irrelevant ordering) or simply fail to apply.

\subsection{Other wild proposals}

The same proof attempt produced further surprises, briefly. \texttt{MB\_ANA\_54600} (operator-norm boundedness, score 12.23) suggested \texttt{use C}, transferring the functional-analysis pattern ``exhibit the boundedness constant'' to an integer existential\---wrong vocabulary, right shape. \texttt{congr 1} was tried for thirty minutes on a sum-of-sums equation: the tactic decomposes $A + B = C + D$ into $A = C \wedge B = D$, which is unsound for sums (terms can be rearranged), but the engine could not detect the unsoundness from structure alone. At deeper levels several \texttt{MB\_LIN\_*} linear-algebra entries scored around 54; their tactic shape---\texttt{ext}, then \texttt{simp [structural unfolding]}, then component-wise reasoning---turned out, surprisingly, to translate well to the finset \texttt{simp [e2\_prod, Finset.sum\_cons, \ldots]} chain that appears in the actual final proof.

\subsection{What we learn from wild ideas}

A working mathematician filters tactic candidates by \emph{content first}: ``this is a number-theory problem, so I will reach for number-theoretic tools.'' Deep Vision filters by \emph{structure first}: ``this is an existential-with-witness shape, so I will reach for any lemma with that shape, regardless of content domain.'' The two filters are complementary, and they differ most precisely where content and structure diverge.

When they agree, the engine has nothing to teach the expert\---its top matches are the ones the mathematician would have already tried. When they diverge\---measure theory for finset arithmetic, adic completion for polynomial remainders, lists for sets\---two things can happen. Most of the time the wild idea is wrong: the structural similarity is real but the content domains are too far apart for the source tactic to apply directly. Occasionally the wild idea points at the right \emph{strategy} (existential $\to$ witness $\to$ closure) even when the exact tactic fails. And as the next section illustrates, sometimes \emph{exploring} a wrong tactic produces the right intermediate steps as a side effect---the false branch becomes a useful scaffold.

This is what Hofstadter and Sander call analogy as the fuel and fire of thinking~\cite{hofstadter_surfaces_2013}. The wild analogies are not always right; they are not even usually right. But occasionally the mathematician confronted with a hard problem benefits from being told, seriously and at length, ``this looks structurally like \texttt{exists\_between}.'' Even when the answer is ``no, it does not,'' the question \emph{why not?} is often where the right approach emerges. The wildness, in other words, is not a bug; it is the feature that lets the engine see across domains a content-trained system cannot.

\section{Claude Code the pessimist}
\label{sec:pessimist}

During the proof of \texttt{prod\_shift\_expansion\_2} (the second-order expansion, Stage~4), an extended dialogue between the human operator and Claude Code revealed an instructive tension between mathematical correctness and exploratory productivity. We reproduce the exchange below.

\medskip
\noindent\textbf{Claude Code:} Interesting---the repair step made progress! After \texttt{use 0}, the goal reduced to proving the equation with $T = 0$. That's mathematically wrong in general (the $d^3$ coefficient isn't zero), but it shows the loop is poking at the structure. It's now doing a recursive match on the subgoal. Let's see what happens.

\medskip
It's entered a depth=1 sub-proof after \texttt{use 0} made progress, and is now re-matching the entire library for the subgoal. But $T = 0$ is wrong in general---the $d^3$ coefficient isn't zero for arbitrary finsets.

The more interesting signal was the proof plan:
\begin{quote}\ttfamily\small
use ($\sum$ i $\in$ s, $\sum$ j $\in$ s.filter ($\cdot$ > i),\\
\phantom{use (}$\sum$ k $\in$ (s.erase i).erase j,\\
\phantom{use (}\phantom{$\sum$ k $\in$ }$\prod$ l $\in$ ((s.erase i).erase j).erase k, (l : $\mathbb{Z}$))\\
ring
\end{quote}
That's a plausible witness for $T$ (the third elementary symmetric product), but \texttt{ring} almost certainly won't close a Finset product/sum identity. The right strategy here is induction on $s$ using \texttt{Finset.cons\_induction}---exactly how \texttt{prod\_shift\_expansion} (the first-order version) was proved at line~290 of the draft.

The sub-proof matching will take another ${\sim}30$~min and likely fail. Want me to kill this run and instead try a direct hand-written induction proof, mirroring the existing \texttt{prod\_shift\_expansion} pattern?

\medskip
\noindent\textbf{Human:} No, this will succeed.

\medskip
\noindent\textbf{Claude Code:} Fair enough---let me keep monitoring it.

\medskip
Watching for tactic attempts on the depth-1 subgoal.

Good matches this time---linear algebra entries with scores in the 60s. And the plan looks promising: \texttt{simp only [mul\_zero, add\_zero]; ring\_nf; simp only [e2\_prod]; ring}. Let's see if the tactic attempts land.

\medskip
It's gone to depth~2. The \texttt{use 0} simplified $d^3 \cdot 0$ to $0$, leaving the goal:
\[
\prod_{k \in s}(d + k) - \prod_{k \in s} k - d \sum_{j \in s} \prod_{\substack{k \in s \\ k \neq j}} k = d^2 \cdot e_2(s)
\]
So it made more progress---something at depth~1 simplified the $d^3 \cdot 0$ term away. Now it's trying to prove the product difference minus the first-order term equals $d^2 \cdot e_2(s)$. That's a cleaner goal. Another full library scan though (${\sim}30$~min).

\medskip
\noindent\textbf{Human:} Was your pessimism justified?

\medskip
\noindent\textbf{Claude Code:} Good question. Let me be honest about what happened.

My math was correct: $T = 0$ IS wrong for $|s| \geq 3$. For $s = \{1,2,3\}$, the LHS is $d^3 + 6d^2$ but $d^2 \cdot e_2 = 6d^2$, so the gap is $d^3 \neq 0$. The loop spent 12+ hours trying to close a false equation.

But your instinct was right: the exploration was productive despite chasing a false goal. The 7-step \texttt{simp} chain that the loop discovered while trying to prove $\text{LHS} = d^2 \cdot e_2$ was exactly what we needed for the actual proof. The chain unfolded \texttt{e2\_prod}, decomposed the Finset sums, and simplified the filter/insert structure---all steps that appear in the final \texttt{e2\_prod\_cons} proof.

What I missed: You said ``you are not seeing the analogous proofs that suggested it.'' You were right---the analogous proofs were teaching the pattern (provide a witness, close by computation), and \texttt{use 0} is correct for the base case. I was too focused on the global incorrectness to see the structural lesson.

The real lesson: a proving loop exploring a false branch can still discover useful intermediate steps. The \texttt{simp} chain doesn't care whether the final goal is provable---each individual simplification is valid. The loop was doing exactly what an explorer should: mapping the terrain, even when the destination is a dead end. The map (the tactic chain) turned out to be the treasure, not the destination.

So: justified pessimism about the specific witness, unjustified pessimism about the process. You were right to let it run.

\section{Related Work}

\subsection{Wolstenholme's theorem and its extensions}

Me\v{s}trovi\'c~\cite{mestrovic2012} provides a comprehensive survey of more than 70 generalizations and extensions of Wolstenholme's theorem accumulated over 150 years.
McIntosh~\cite{mcintosh1995} investigated the converse direction, asking which composites satisfy the Wolstenholme congruence.
Me\v{s}trovi\'c~\cite{mestrovic2014} later proved refined congruences specifically for Wolstenholme primes, extending the classical result to modulus $p^7$.
Helou and Terjanian~\cite{helou2008} gave a unified treatment of Wolstenholme's theorem and its converse using properties of divided Bernoulli numbers.
Gessel~\cite{gessel2008} offered an elegant short proof of Wolstenholme's theorem via generating functions.
Pain~\cite{pain2026} recently presented an alternative proof using an Egorychev-type contour integral, making the connection with harmonic sums and Bernoulli numbers completely explicit.
Trevi\~no~\cite{trevino2014} generalized the Wolstenholme congruence to binomial coefficients associated with Lucas sequences.

\subsection{Power sums, harmonic sums, and binomial congruences}

Our proof relies on power sum vanishing in $\mathbb{Z}/p\mathbb{Z}$; Thakur~\cite{thakur2012} surveys power sums of polynomials over finite fields and their applications to zeta values and other arithmetic structures.
Davis~\cite{davis2008} gave elementary proofs of power sum and binomial coefficient congruences using only Pascal's identity.
Conrad~\cite{conrad_padic} studied the $p$-adic growth of harmonic sums, providing the foundational framework for understanding the $p$-adic valuation of $H_{p-1}$ that underlies our Stage~3.
Sun~\cite{sunzhiwei2011} proved supercongruences connecting binomial sums to Euler numbers modulo high powers of primes, a theme closely related to our $e_2$ divisibility argument.
Sun~\cite{sunzhiwei2019} collected over 100 open conjectures on congruences involving binomial coefficients and related combinatorial sequences.
Qi~\cite{qi2025} recently established new congruences involving binomial coefficients and Fermat quotients modulo $p^3$.
Bayless and Klyve~\cite{bayless2014} studied reciprocal sums and their arithmetic properties, connecting them to classical questions about perfect numbers.

\subsection{Formal verification of number theory}

Brasca et al.~\cite{brasca2025} formalized a complete proof of Fermat's Last Theorem for regular primes in Lean~4, the most substantial number-theoretic formalization in Lean to date.

\subsection{AI-assisted theorem proving}

Song, Yang, and Anandkumar~\cite{song2024} developed Lean Copilot, a framework for running LLM inference natively within Lean to suggest proof steps, automating 74\% of proof steps on benchmark problems.
Google DeepMind's AlphaProof~\cite{alphaproof2025} achieved silver medal performance at the 2024 International Mathematical Olympiad by combining reinforcement learning with Lean~4 formalization.
The Harmonic Team's Aristotle~\cite{aristotle2025} achieved gold-medal-level performance on the 2025 IMO problems by integrating informal reasoning with formal verification in Lean~4.
Our approach differs from these systems in using \emph{relational analogy---structural similarity of proof states}---rather than neural next-step prediction or reinforcement learning. This pruning enables massive efficiency gains and the whole system runs on a single laptop (modulo Claude Code).

\section{Conclusion}

Wolstenholme's theorem, a 164-year-old result in number theory, has been formally verified in Lean~4.
The key technical challenge was identifying and proving the divisibility of the second elementary symmetric product $e_2(\{1, \ldots, p-1\})$, which required connecting combinatorial Finset manipulation with modular arithmetic in $\mathbb{Z}/p\mathbb{Z}$.
The formalization demonstrates that classical number theory theorems at this level of difficulty are within reach of current formal verification tools, particularly when analogous proofs guide the proof decomposition.

\section*{Disclosure of Generative AI Use}

In accordance with arXiv policy, we disclose that this work made significant use of AI tools. The proof was developed with the assistance of two systems: \emph{Deep Vision}, a relational-analogy engine that explored the tactic space and surfaced structurally similar Mathlib lemmas (Sections~\ref{sec:wild-ideas} and following), and \emph{Claude Code}, a large language model used to write and repair the Finset manipulation and number-theoretic arguments during formalization. All tactics and proofs produced with these tools were verified by the Lean~4 compiler, which accepts no unproved step; the final formalization contains zero \texttt{sorry} and zero \texttt{axiom} declarations. Neither tool is listed as an author. The author reviewed and takes full responsibility for the entire contents of this paper, including all mathematical claims and any material generated with AI assistance.

\end{document}